\newtheorem{theorem}{Theorem}
\newtheorem{lemma}[theorem]{Lemma}
\newtheorem{proposition}[theorem]{Proposition}
\newenvironment{proof}[1][Proof]{\begin{trivlist}
\item[\hskip \labelsep {\bfseries #1}]}{\end{trivlist}}
 \newcommand{\ket}[1]{|#1\rangle}
 \newcommand{\bra}[1]{\langle #1|}
 \newcommand{\braket}[2]{\langle#1|#2\rangle}
 \newcommand{\Id}{{\mathbb I}}
 \newcommand{\T}{{\mathrm t}}
\newcommand{\Tr}{{\mathrm {Tr}}}
\newcommand{\etal}{\textit {et al.} }
\begin{document}
\title{Quantifying Nonclassicality of Correlations  based on the Concept of Nondisruptive Local State Identification}
\author{Azam Kheirollahi\footnote{a.kheirollahi@sci.ui.ac.ir}}\affiliation{Department of Physics, University of Isfahan,
 Isfahan, Iran}
\author{Seyed Javad Akhtarshenas\footnote{akhtarshenas@um.ac.ir}}
\affiliation{Department of Physics, Ferdowsi University of Mashhad,
 Mashhad, Iran}
 \author{Hamidreza Mohammadi\footnote{hr.mohammadi@sci.ui.ac.ir}}\affiliation{Department of Physics, University of Isfahan,
 Isfahan, Iran}
\affiliation{Quantum Optics Group, University of Isfahan,
 Isfahan, Iran}

\begin{abstract}
A bipartite state is classical with respect to party $A$ if and only if party $A$ can perform  nondisruptive local state identification (NDLID) by a projective measurement. Motivated by this we  introduce a  class of quantum correlation measures for an arbitrary bipartite state. The measures utilize   the general Schatten $p$-norm to quantify the amount of departure from the  necessary and sufficient condition of classicality of correlations provided by the concept of NDLID. We show that for the case of Hilbert-Schmidt norm,  i.e. $p=2$, a closed formula is available  for an arbitrary bipartite state. The reliability of the proposed measures is checked from the information theoretic perspective. Also, the monotonicity behavior of these measures under LOCC is exemplified.   The results reveal that for the general pure bipartite states these measures have an upper bound which is an entanglement monotone in its own right. This enables  us to introduce  a new measure of entanglement, for a general bipartite  state, by  convex roof construction. Some examples and comparison with other quantum correlation measures  are also provided.
\end{abstract}

\keywords{Quantum Correlation, Nondisruptive local state identification, Schatten $p$-norm, Entanglement monotone}
\pacs{03.67.-a, 03.65.Ta, 03.65.Ud}

\maketitle

\section{Introduction}
The most significant feature of quantum systems is the quantum superposition. This property of quantum mechanics  arises from the linearity of quantum mechanics and is the origin of the quantum correlation  in composite quantum systems. For decades, the notion of quantum correlation was often associated with the concept of entanglement.  Entanglement is an important resource for quantum information and computation processing and is necessary for performance of some quantum communication protocols \cite{Barnett-book}. However, entanglement is not the only aspect of quantum correlations; some separable (disentangled) states exhibit nonclassical features \cite{Zurek2001, Henderson2001}.
A great deal of works has been spent to the subject of the measures of correlations (see \cite{Modi2012, Celeri2011} and references therein), and various measures of quantum correlations beyond  entanglement have been introduced, some of them are known under the collective name quantum discord. Many of these measures have been related to various tasks and concepts in quantum information and quantum computation such as, decoherence \cite{Zurek2001}, measurement induced non locality \cite{Luo2011}, geometry of state space \cite{Dakic2010, Akhtarshenas2015}, state discrimination \cite{Li2012},  deterministic quantum computation with one qubit \cite{Knill1998, Datta2008, Fanchini2011}, witnessing the quantum correlation \cite{Fazio2013}, no-broadcasting \cite{Piani2008,Piani2009, Luo2010}, quantum metrology \cite{Modi2011}, quantum state merging \cite{Horodecki2005, Cavalcanti2011, Madhok2011}, and quantum thermodynamics \cite{Zurek2003, Brodutch2010}. There have also been several proposals related to experimental investigations of classical correlation and quantum correlation beyond entanglement \cite {Lanyon2008, Xu2010, Rahimi2010, Yu2011, Auccaise2011, Auccaise2011-1, Dakic2012}.

The space of classically correlated states is a measure-zero subspace of the space of separable states \cite{FerraroPRA2010}. The state $\rho$ of a bipartite system is called classical-quantum if it is classical only  with respect to the party $A$, i.e. if and only if it can be represented as $\rho=\sum_ip_i\Pi_i^{A}\otimes \rho_i^{B}$,
with $\Pi_i^A=\ket{i}\bra{i}$ as the projection operator on the orthonormal basis of $\mathcal{H}^{A}$, and $\rho_i^B$ being a state on $\mathcal{H}^{B}$.
The same definition holds for the quantum-classical states, i.e. states that are classical only with respect to the party $B$.
A state  is classically correlated  if and only if it is both classical-quantum  and quantum-classical state.

In order to distinguish classically correlated states from the set of quantum states,  Chen \etal \cite{ChenPRA2011} have introduced the concept of nondisruptive local state identification (NDLID). A bipartite state $\rho$  is classical with respect to party $A$ if and only if party $A$ can perform  NDLID by a projective measurement \cite{ChenPRA2011}.  They showed that the states which can undergo NDLID task are locally broadcastable (see \cite{Piani2008} for local broadcasting) and hence are classical states, i.e. they are classically correlated states. Accordingly,  they  provided the following theorem in order to decide whether or not a given bipartite state $\rho$, acting on the Hilbert space $\mathcal{H}=\mathcal{H}^A\otimes \mathcal{H}^B$,  is classical with respect to the party $A$.
\begin{theorem}\label{TheoremChenPRA} \cite{ChenPRA2011}
Let $\rho$ be a bipartite state acting on the Hilbert space $\mathcal{H}=\mathcal{H}^A\otimes \mathcal{H}^B$. Let also $\Phi^{(B)}=\{\ket{\phi_{i}^{(B)}}\}_{i=1}^{d_B}$ denotes  any orthonormal basis for $\mathcal{H}^{B}$. Then $\rho$ is classical with respect to party $A$ if and only if
 \begin{equation}\label{Aij}
A_{ij}^{\Phi^{(B)}}(\rho):=\bra{\phi_{i}^{(B)}}\rho\ket{\phi_{j}^{(B)}},
\end{equation}
is diagonal in the same orthonormal basis  $\{\ket{a_{k}}\}_{k=1}^{d_A}$ for all $i,j$.
\end{theorem}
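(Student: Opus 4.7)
The plan is to exploit the block structure of $\rho$ in the $B$-basis by recognizing that the operators $A_{ij}^{\Phi^{(B)}}(\rho)$ are precisely the ``matrix coefficients'' of $\rho$ viewed as an operator-valued matrix indexed by the basis $\Phi^{(B)}$. Concretely, I would start by writing the tautological expansion
\begin{equation}
\rho=\sum_{i,j=1}^{d_B} A_{ij}^{\Phi^{(B)}}(\rho)\otimes\ket{\phi_i^{(B)}}\bra{\phi_j^{(B)}},
\end{equation}
which follows directly from inserting resolutions of the identity $\sum_i\ket{\phi_i^{(B)}}\bra{\phi_i^{(B)}}=\Id^B$ on both sides of $\rho$. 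Everything else will be algebraic manipulation of this identity combined with the positivity of $\rho$.

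For the $(\Rightarrow)$ direction, I would take the classical-quantum form $\rho=\sum_k p_k\ket{a_k}\bra{a_k}\otimes\rho_k^B$ and simply compute
\begin{equation}
A_{ij}^{\Phi^{(B)}}(\rho)=\sum_k p_k\bra{\phi_i^{(B)}}\rho_k^B\ket{\phi_j^{(B)}}\,\ket{a_k}\bra{a_k},
\end{equation}
which is manifestly diagonal in $\{\ket{a_k}\}$ for every pair $(i,j)$, uniformly in the basis $\Phi^{(B)}$.

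For the $(\Leftarrow)$ direction, assume that every $A_{ij}^{\Phi^{(B)}}(\rho)$ is diagonal in the common basis $\{\ket{a_k}\}$, say $A_{ij}^{\Phi^{(B)}}(\rho)=\sum_k c_{ij}^{(k)}\ket{a_k}\bra{a_k}$. Substituting into the expansion above and interchanging sums gives
\begin{equation}
\rho=\sum_k \ket{a_k}\bra{a_k}\otimes M_k,\qquad M_k:=\sum_{i,j}c_{ij}^{(k)}\ket{\phi_i^{(B)}}\bra{\phi_j^{(B)}}.
\end{equation}
Here I would use that $\rho\ge 0$: sandwiching with product vectors of the form $\ket{a_k}\otimes\ket{\eta}$ for arbitrary $\ket{\eta}\in\mathcal{H}^B$ shows $\bra{\eta}M_k\ket{\eta}\ge 0$, so each $M_k$ is a positive semidefinite operator on $\mathcal{H}^B$. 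Taking $p_k=\Tr M_k$ and $\rho_k^B=M_k/p_k$ (for $p_k>0$) and using $\Tr\rho=1$ yields the classical-quantum decomposition $\rho=\sum_k p_k\ket{a_k}\bra{a_k}\otimes\rho_k^B$.

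The statement is essentially a structural lemma, so there is no single hard obstacle; the only delicate point is to make sure the orthogonality of $\{\ket{a_k}\}$ is used to read off the ``diagonal block'' $M_k$ unambiguously, and that positivity and normalization of the $M_k$ are properly inherited from $\rho$. Both follow from the product-vector sandwich argument above together with the cyclicity of the trace.
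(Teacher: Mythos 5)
Your argument is correct and self-contained. Note that the paper itself supplies no proof of this theorem: it is imported verbatim from Chen \emph{et al.} \cite{ChenPRA2011}, where the result is established via the operational notions of nondisruptive local state identification and local broadcasting. Your route is a purely structural one: the tautological block expansion $\rho=\sum_{i,j}A_{ij}^{\Phi^{(B)}}(\rho)\otimes\ket{\phi_i^{(B)}}\bra{\phi_j^{(B)}}$, the trivial computation for the forward direction, and for the converse the resummation $\rho=\sum_k\ket{a_k}\bra{a_k}\otimes M_k$ together with the product-vector sandwich $\bra{a_k}\otimes\bra{\eta}\,\rho\,\ket{a_k}\otimes\ket{\eta}=\bra{\eta}M_k\ket{\eta}\ge 0$, which correctly uses the orthonormality of $\{\ket{a_k}\}$ to isolate each block and the positivity of $\rho$ to conclude $M_k\ge 0$; normalization then gives the classical-quantum form $\sum_k p_k\ket{a_k}\bra{a_k}\otimes\rho_k^B$ demanded by the paper's definition of classicality with respect to $A$. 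This is more elementary and more directly matched to the algebraic way the theorem is actually used in the paper (as a commutativity criterion on the set $\mathcal{A}^{\Phi^{(B)}}(\rho)$), whereas the original NDLID argument buys the operational interpretation that motivates the measures $D_p$. One small point worth making explicit: your converse only assumes diagonality in a single fixed basis $\Phi^{(B)}$, while the forward direction delivers it for every basis, so the two directions together establish the statement in its strongest reading.
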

To characterize classically correlated states, Wu \etal  have obtained similar results in \cite{Wu2011}  and proposed a norm-based measurement of quantum correlation of two-qubit states. They have used max norm of operators to quantify the deviation from the necessary and sufficient condition for classical correlated state, and investigated the dynamics of quantum correlations in Markovian and non-Markovian processes.  In a similar manner,  Guo \etal \cite{Guo2012}  introduced a different measure of quantum correlation  by using Hilbert-Schmidt norm.
While both of  these measures are computable, they did not include an in-depth analysis of the correlation measures from an information theoretic perspective; they are not judged according to any information theoretic criteria like the criteria provided in Ref. \cite{Brodutch2012}. More precisely, the former is base-dependent, so  it is not invariant under local unitary transformation and  can not be considered as a reliable measure.  The latter, however,  does not reduce to an entanglement monotone for pure states.

In this paper we use the concept of NDLID and provide a class of quantifiers of quantum correlation for an arbitrary bipartite state.
As a bipartite state $\rho$ can undergo NDLID by party $A$  if and only if it is classical with respect to party $A$ \cite{ChenPRA2011},  any disability of such task comes from the nonclassical correlation of the party $A$. Exploiting this notion, we define a measure of quantum correlation by quantifying   the amount that the state violates the  necessary and sufficient condition of classicality of correlation, stated in the theorem above. We utilize a general Schatten $p$-norm \cite{Schatten-norm} to quantify the degree of non-commutativity of the operators $A_{ij}^{\Phi^{(B)}}(\rho)$ in Eq. \eqref{Aij}. Followed by the minimization over the orthonormal basis $\Phi^{(B)}$, we lead to a  class of quantum correlation quantifiers which are invariant under local unitary transformations. We show that for the Hilbert-Schmidt norm, i.e. $p=2$, the defined measure does not require optimization, leading to a closed relation in this case.
We also  show that for an arbitrary $p$, our measures are  non-increasing upon attaching local ancillary state on the unmeasured subsystem.    Furthermore, we find that the one-norm, i.e. $p=1$, is the case that the measure remains invariant upon attaching to or removing of local ancillary state on the unmeasured subsystem. The monotonicity behavior of the  measures   under local operations and classical communications (LOCC) is also exemplified and we find that for two-qubit case the measures are monotone for $p\leq3$. In addition we find that, at least for $d\le 3$,  the measures are monotone  for $p=1$. This result enables us to define an entanglement measure by convex roof construction.

The remainder of this paper is organized as follows. In section II, we introduce our measures of quantum correlation and provide a closed relation for the case of Hilbert-Schmidt norm. Section III is devoted to investigate some properties of the measures. Some examples are given in section IV. The paper is concluded in section V.

\section{ Quantifying quantum correlation through the Schatten $p$-norm}
  Theorem \ref{TheoremChenPRA}  provides a necessary and sufficient condition for classicality of a bipartite state $\rho$ due to party $A$, in the sense that
 $\rho$ is classic with respect to party $A$ if and only if  $A_{ij}^{\Phi^{(B)}}(\rho)$ is diagonal in the same orthonormal basis $\{\ket{a_{k}}\}_{k=1}^{d_A}$ for all $i,j$.
 But the set   $\mathcal{A}^{\Phi^{(B)}}(\rho)=\{A_{ij}^{\Phi^{(B)}}(\rho)\}_{i,j=1}^{d_B}$ of operators have a simultaneous eigenvectors if and only if they are all normal operators, i.e. $[A_{ij}^{\Phi^{(B)}}(\rho),{A_{ij}^{\Phi^{(B)}}}^\dagger(\rho)]=0$ for $i,j=1,\cdots,d_B$, and that all  operators commute by pairs, i.e. $[A_{ij}^{\Phi^{(B)}}(\rho),A_{kl}^{\Phi^{(B)}}(\rho)]=0$ for all pairs $ij$ and $kl$. However, since ${A_{ij}^{\Phi^{(B)}}}^\dagger(\rho)=A_{ji}^{\Phi^{(B)}}(\rho)$, i.e. the above set is closed under Hermitian adjoint,  so that this theorem implies that $\rho$ is classical with respect to party $A$ if and only if for any orthonormal basis $\Phi^{(B)}$ of $\mathcal{H}^B$ the commutator    $[A_{ij}^{\Phi^{(B)}}(\rho),A_{kl}^{\Phi^{(B)}}(\rho)]$ vanishes  for all pairs $ij$ and $kl$. In other words, $\rho$ is classical with respect to party $A$ if and only if for any orthonormal basis $\Phi^{(B)}$ of $\mathcal{H}^B$ the set  $\mathcal{A}^{\Phi^{(B)}}(\rho)$ forms a set of commuting operators.

Measuring any departure from  this condition may be used as an indicator of the quantumness of the system.
In order to quantify any violation of this condition, we use the general Schatten $p$-norm and quantify the order of non-commutativity of the set $\mathcal{A}^{\Phi^{(B)}}(\rho)$. Let $\mathcal{M}^{\Phi^{(B)}}(\rho)=\{M_{ij,kl}^{\Phi^{(B)}}(\rho)\}$, with  $M_{ij,kl}^{\Phi^{(B)}}(\rho)=[A_{ij}^{\Phi^{(B)}}(\rho),A_{kl}^{\Phi^{(B)}}(\rho)]$, denotes a set of operators obtained from the pairwise commutators of all entities  of $\mathcal{A}^{\Phi^{(B)}}(\rho)$.  Using the collective index $I=\{ij,kl\}$ for entities of $\mathcal{M}^{\Phi^{(B)}}(\rho)$, we write  the Schatten $p$-norm of $M_{I}^{\Phi^{(B)}}(\rho)$ as \cite{Schatten-norm}
\begin{equation}\label{DpRhoPhiBI}
D_{p}[M_{I}^{\Phi^{(B)}}(\rho)]:=\left\|M_{I}^{\Phi^{(B)}}(\rho)\right\|_p=\left[\Tr\left(M_{I}^{\Phi^{(B)}}(\rho)
{M_{I}^{\Phi^{(B)}}}^\dagger(\rho) \right)^{\frac{p}{2}}\right]^{\frac{1}{p}}.
\end{equation}
Theorem \ref{TheoremChenPRA} then implies that $\rho$ is a classical state with respect to party $A$ if and only if the above quantity vanishes for all entities of the set $\mathcal{M}^{\Phi^{(B)}}(\rho)$, i.e. for any pair of indices $I=\{ij,kl\}$. Accordingly, we define
\begin{equation}\label{DpRhoPhiB}
D_{p}^{\Phi^{(B)}}(\rho):=\left[\sum_{I}^{\;\;\quad\prime} \left(D_{p}[M_{I}^{\Phi^{(B)}}(\rho)]\right)^p\right]^{1/p}=\left[\sum_{I}^{\;\;\quad\prime}\Tr\left(M_{I}^{\Phi^{(B)}}(\rho){M_{I}^{\Phi^{(B)}}}^{\dagger}(\rho) \right)^{\frac{p}{2}}\right]^{1/p},
\end{equation}
as an indicator of the quantumness of the correlation of $\rho$. Here, we used $\sum_{I}^{\prime}$ to stress that the sum is performed over all inequivalent nontrivial pairs of $I=\{ij,kl\}$, i.e. for all pairs  such that $\{ij,kl\}\ne\{kl,ij\}$, in order to avoid double counting. Evidently, $\rho$ is a classical-quantum state if and only if $D_{p}^{\Phi^{(B)}}(\rho)=0$ for any orthonormal basis $\Phi^{(B)}$ of party $B$. However, the above quantity depends on the chosen basis, so that to make it independent on the basis of the party $B$, we propose the following quantity as a measure of the quantumness of the correlation.

\begin{proposition}
For any bipartite state $\rho$ we define
\begin{equation}\label{DpRho}
D_{p}(\rho)=\min _{\Phi^{(B)}}D_{p}^{\Phi^{(B)}}(\rho),
\end{equation}
as a measure of the quantumness of the correlation of $\rho$ with respect to the party $A$. Here the minimum is taken over  any orthonormal basis for $\mathcal{H}^{B}$.
\end{proposition}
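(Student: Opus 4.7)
The plan is to verify that $D_p(\rho)$ is a well-defined quantifier by establishing three facts implicit in the proposition: (a) the minimum is actually attained, (b) $D_p(\rho)=0$ precisely when $\rho$ is classical with respect to party $A$, and (c) $D_p(\rho)$ is invariant under local unitaries. For (a) I would invoke a compactness argument. The set of orthonormal bases of $\mathcal{H}^B$ is parameterised, up to irrelevant global phases, by the compact unitary group $U(d_B)$, and $\Phi^{(B)}\mapsto D_p^{\Phi^{(B)}}(\rho)$ is a composition of continuous maps: the basis vectors enter continuously into the partial matrix elements $A_{ij}^{\Phi^{(B)}}(\rho)$, then into their pairwise commutators $M_{ij,kl}^{\Phi^{(B)}}(\rho)$, and finally into the Schatten $p$-norm; continuity on a compact domain then forces attainment of the minimum.

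For (b), I would argue both directions directly. If $\rho=\sum_k p_k\Pi_k^A\otimes\rho_k^B$, then for every basis $\Phi^{(B)}$ one has $A_{ij}^{\Phi^{(B)}}(\rho)=\sum_k p_k\bra{\phi_i^{(B)}}\rho_k^B\ket{\phi_j^{(B)}}\Pi_k^A$, each operator being diagonal in $\{\ket{k}\}_A$; the whole collection mutually commutes, every commutator in $\mathcal{M}^{\Phi^{(B)}}(\rho)$ vanishes, and therefore $D_p^{\Phi^{(B)}}(\rho)=0$ for every $\Phi^{(B)}$, which gives $D_p(\rho)=0$. Conversely, if $D_p(\rho)=0$, the minimising basis $\Phi^{(B)}_{*}$ produced in (a) makes every term in the sum equal to zero; each $M_{ij,kl}^{\Phi^{(B)}_{*}}(\rho)$ therefore has vanishing Schatten $p$-norm and is the zero operator. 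Since $\mathcal{A}^{\Phi^{(B)}_{*}}(\rho)$ is closed under Hermitian adjoint and mutually commuting, the spectral theorem for commuting normal operators provides a common eigenbasis on $\mathcal{H}^A$, and Theorem~\ref{TheoremChenPRA} then forces $\rho$ to be classical with respect to $A$.

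For (c), I would compute that under $\rho\mapsto(U_A\otimes U_B)\rho(U_A^\dagger\otimes U_B^\dagger)$ one has $A_{ij}^{\Phi^{(B)}}\mapsto U_A\,A_{ij}^{\tilde\Phi^{(B)}}(\rho)\,U_A^\dagger$ with $\tilde\Phi^{(B)}=\{U_B^\dagger\ket{\phi_i^{(B)}}\}$; the pairwise commutators transform by the same $U_A$-conjugation, the Schatten $p$-norm is unitarily invariant, and $\Phi^{(B)}\mapsto\tilde\Phi^{(B)}$ is a bijection on orthonormal bases, so minimising over $\Phi^{(B)}$ yields the same value as minimising over $\tilde\Phi^{(B)}$. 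The step I expect to need the most care is the converse in (b): the chain from \emph{vanishing $p$-norm} to \emph{vanishing commutator} to \emph{simultaneous diagonalisation on} $\mathcal{H}^A$ must be threaded through the spectral theorem for commuting normal operators, appealing explicitly to normality of each $A_{ij}^{\Phi^{(B)}_{*}}(\rho)$ (which follows from the adjoint-closure $A_{ji}=A_{ij}^\dagger$) before invoking Theorem~\ref{TheoremChenPRA} to conclude classicality, so as to avoid any circularity with that theorem.
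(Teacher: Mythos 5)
Your proposal is correct and is essentially the consolidation of what the paper does: the forward and converse directions of your point (b) reproduce the paper's argument that classicality is equivalent to the vanishing of all the commutators $M_{ij,kl}^{\Phi^{(B)}}(\rho)$ (using adjoint-closure of $\mathcal{A}^{\Phi^{(B)}}(\rho)$ to get normality and then Theorem~\ref{TheoremChenPRA}), and your point (c) is the same local-unitary computation the paper gives as Property~3 in Section~III. The only genuinely new ingredient is your compactness argument for attainment of the minimum, which the paper omits but which is standard and correct; note also that your use of a single minimising basis in the converse is legitimate because a change of basis $\Phi^{(B)}\to\Phi'^{(B)}$ merely replaces the $A_{ij}$ by invertible linear combinations of one another, so simultaneous diagonalisability in one basis of $\mathcal{H}^B$ implies it in all.
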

Before discussing various properties of $D_{p}(\rho)$, let us mention that in the particular case $p=2$ the definition \eqref{DpRho} does not require  minimization, i.e.  $D_{p=2}(\rho)=D_{p=2}^{\Phi^{(B)}}(\rho)$ for any basis ${\Phi^{(B)}}$. In Appendix \ref{AppendixProofD=2} we will provide a proof for this assertion, along  with  a closed relation  for $D_{p=2}(\rho)$. The result is summarized in the following theorem.
\begin{theorem}\label{D2FATheorem}
For an arbitrary bipartite state $\rho$, we find the following closed relation for $D_{p=2}(\rho)$
\begin{eqnarray}\label{sqD2FA}
D_{p=2}(\rho)=\frac{2}{d_A^2 d_B^2}\sqrt{
-\Tr{\left\{\mathcal{F}^A(\rho)\left[d_B \vec{x}\vec{x}^\T+TT^\T\right]\right\}}},
\end{eqnarray}
where $\vec{x}$ is the local coherence vector of party $A$, and $T$ denotes correlation matrix of the state $\rho$,  defined by Eqs. \eqref{xy} and \eqref{T}, respectively. Moreover,  $\mathcal{F}^A(\rho)=\sum_{r=1}^{d_A^2-1}\left[F^A_r(TT^\T){F^A_r}^\dagger\right]$ where
$(F^A_r)_{pq}=-if^A_{pqr}$ with $f^A_{pqr}$ as the structure constant of the Lie algebra $SU(d_A)$ (see \eqref{SUmGellMann}).
\end{theorem}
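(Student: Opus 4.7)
The plan is to expand $\rho$ in the Fano/Bloch form and carry out the commutator and Hilbert--Schmidt norm calculation in closed form, showing along the way that the result is independent of the orthonormal basis $\Phi^{(B)}$.

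First I would write
\[
\rho=\frac{1}{d_A d_B}\Bigl(\Id_A\otimes\Id_B+\sum_p x_p\,\lambda_p^A\otimes\Id_B+\sum_q y_q\,\Id_A\otimes\lambda_q^B+\sum_{p,q}T_{pq}\,\lambda_p^A\otimes\lambda_q^B\Bigr),
\]
with $\lambda^{A,B}$ the $SU(d_{A,B})$ generators of \eqref{SUmGellMann}, so that the local coherence vector $\vec{x}$ and correlation matrix $T$ of \eqref{xy} and \eqref{T} appear explicitly. Substituting into $A_{ij}^{\Phi^{(B)}}(\rho)=\bra{\phi_i^{(B)}}\rho\ket{\phi_j^{(B)}}$ yields an operator on $\mathcal{H}^A$ whose expansion in the basis $\{\Id_A,\lambda_p^A\}$ has coefficients that are affine in $\vec x,\vec y,T$ contracted against $\bra{\phi_i^{(B)}}\Id_B\ket{\phi_j^{(B)}}=\delta_{ij}$ and $\bra{\phi_i^{(B)}}\lambda_q^B\ket{\phi_j^{(B)}}$.

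Next I would form $M_{ij,kl}^{\Phi^{(B)}}(\rho)=[A_{ij}^{\Phi^{(B)}}(\rho),A_{kl}^{\Phi^{(B)}}(\rho)]$. All identity pieces drop out (in particular, the $\vec y$-coefficient multiplies $\Id_A$ and so does not appear anywhere in the commutator, which is why the closed form is $\vec y$-independent), and $[\lambda_p^A,\lambda_q^A]=2i\sum_r f_{pqr}^A\lambda_r^A$ turns the remaining commutators into a linear combination of $\lambda_r^A$'s whose coefficients are bilinear in $(\vec x,T)$ and carry two $\lambda^B$-matrix elements in the $\Phi^{(B)}$ basis. The pure $xx$ piece collapses by the antisymmetry $f_{pqr}^A x_p x_q=0$, leaving only $xT$, $Tx$ and $TT$ pieces. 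Using $\Tr(\lambda_r^A\lambda_s^A)=2\delta_{rs}$, the Hilbert--Schmidt norm $\Tr(MM^\dagger)$ then reduces to a scalar polynomial in these bilinear coefficients.

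The decisive step is the summation $\sum_I^{\;\prime}$ over inequivalent pairs in \eqref{DpRhoPhiB}. Because $i,j,k,l$ each range over a complete orthonormal $B$-basis, sums of the form $\sum_{ij}\bra{\phi_i^{(B)}}\lambda_q^B\ket{\phi_j^{(B)}}\bra{\phi_j^{(B)}}\lambda_{q'}^B\ket{\phi_i^{(B)}}$ collapse by completeness to $\Tr(\lambda_q^B\lambda_{q'}^B)=2\delta_{qq'}$, while disconnected pieces vanish through $\Tr(\lambda_q^B)=0$. This removes all $\Phi^{(B)}$-dependence and in particular justifies the claim $D_{p=2}(\rho)=D_{p=2}^{\Phi^{(B)}}(\rho)$ for every orthonormal $\Phi^{(B)}$. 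The $xT$ and $Tx$ contributions then assemble into the $d_B\vec x\vec x^{\T}$ block (the factor $d_B$ arising from the diagonal collapse $\sum_i\delta_{ii}=d_B$), while the $TT$ contribution assembles into the $TT^{\T}$ block, and in both cases the $SU(d_A)$ structure-constant contractions reorganise into $\mathcal{F}^A(\rho)=\sum_r F_r^A(TT^{\T}){F_r^A}^\dagger$. The overall prefactor $2/(d_A^2 d_B^2)$ tracks the two $1/(d_A d_B)$ Bloch prefactors in $\rho$, the $\Tr(\lambda\lambda)=2\delta$ factors, and the $1/2$ from summing over unordered pairs.

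The main obstacle is precisely this combinatorial bookkeeping across four $B$-indices and two pairs of operators, together with verifying that every would-be cross contribution (e.g.\ $\Tr(M^{xT}M^{Tx\dagger})$) either vanishes by $\Tr(\lambda_q^B)=0$ or merges cleanly into the two blocks above. Once the bookkeeping is settled, the minus sign under the square root in \eqref{sqD2FA} follows automatically from the two $-i$'s in $(F_r^A)_{pq}=-if_{pqr}^A$ combined with the $(2i)(-2i)=4$ arising from the two commutators in $MM^\dagger$.
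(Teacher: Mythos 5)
Your proposal follows essentially the same route as the paper's own proof in Appendix~\ref{AppendixProofD=2}: Bloch--Fano decomposition of $\rho$, explicit computation of $A_{ij}^{\Phi^{(B)}}$ in a general basis $U_B\ket{i}$, commutators evaluated via the $su(d_A)$ structure constants, and collapse of the $B$-index sums by completeness, which is precisely how the paper obtains both the closed form and the basis independence. The only detail to reconcile is your commutator normalization $[\lambda_p^A,\lambda_q^A]=2i\sum_r f^A_{pqr}\lambda_r^A$ versus the paper's convention $[\hat{\lambda}_i^s,\hat{\lambda}_j^s]=i\sum_k f^s_{ijk}\hat{\lambda}_k^s$ in Eq.~\eqref{SUmGellMann}, which affects only the overall prefactor bookkeeping and not the structure of the argument.
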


\section{Properties of $D_p(\rho)$}
In this section we investigate  some properties of $D_{p}(\rho)$. To make these properties clearer, we first discuss  the properties of $D_{p}(\rho)$ for a general pure state $\rho=\ket{\psi}\bra{\psi}$.  Then we will discuss the properties of $D_{p}(\rho)$ for a general mixed state $\rho$.
\subsection{Properties of $D_p(\rho)$: Pure states}
 Let $\ket{\psi}=\sum_{m=1}^{d}\sqrt{\lambda_m}\ket{e_m^A} \ket{e_m^B}$ be a general pure state in its Schmidt representation. We find $A^{\Phi^{(B)}}_{ij}(\psi)=\ket{\xi_i^A}\bra{\xi_j^A}$ where $\ket{\xi_i^A}=\braket{\phi_i^B}{\psi}=\sum_{m=1}^d\alpha_m^{(i)}\ket{e_m^A}$ with $\alpha_m^{(i)}=\sqrt{\lambda_m}\braket{\phi_i^B}{e_m^B}$. Interestingly, the set of vectors $\{\ket{\xi_i^A}\}_{i=1}^{d}$ gives us both reduced density matrices $\rho^A$ and $\rho^B$ as  $\rho^A=\sum_{i=1}^{d}\ket{\xi_i^A}\bra{\xi_i^A}$ and $\rho^B_{\Phi^{(B)}_{ij}}=\bra{\phi_i^B}\rho^B\ket{\phi_j^B}=\braket{\xi_j^A}{\xi_i^A}$, respectively. Using these definitions we get, for a fixed $I=\{ij,kl\}$
 \begin{eqnarray}\label{MMdXiA}\nonumber
 M_{I}^{\Phi^{(B)}}(\psi){M_{I}^{\Phi^{(B)}}}^\dagger(\psi)&=&\rho^B_{\Phi^{(B)}_{jk}}\rho^B_{\Phi^{(B)}_{kj}}\rho^B_{\Phi^{(B)}_{ll}}\ket{\xi_i^A}\bra{\xi_i^A}
 -\rho^B_{\Phi^{(B)}_{jk}}\rho^B_{\Phi^{(B)}_{il}}\rho^B_{\Phi^{(B)}_{lj}}\ket{\xi_i^A}\bra{\xi_k^A} \\
 &-& \rho^B_{\Phi^{(B)}_{li}}\rho^B_{\Phi^{(B)}_{kj}}\rho^B_{\Phi^{(B)}_{jl}}\ket{\xi_k^A}\bra{\xi_i^A}+
 \rho^B_{\Phi^{(B)}_{li}}\rho^B_{\Phi^{(B)}_{il}}\rho^B_{\Phi^{(B)}_{jj}}\ket{\xi_k^A}\bra{\xi_k^A},
 \end{eqnarray}
 where can be used in Eq. \eqref{DpRhoPhiB} to obtain $D_{p}^{\Phi^{(B)}}(\psi)$ for an arbitrary  bipartite  pure state $\ket{\psi}$, and in any basis  $\Phi^{(B)}$ of $\mathcal{H}^B$.
 Using the above relation we provide a tight  upper bound for $D_{p}(\psi)$.
 \begin{lemma}\label{DpPureLemma}
For a general pure state $\ket{\psi}$ with Schmidt numbers $\{\lambda_m\}_{m=1}^{d}$, the quantum correlation $D_{p}(\psi)$ is bounded from above as
\begin{equation}\label{DpPureUB}
D_{p}(\psi)\le \left[\sum_{i<k}(\lambda_i\lambda_k)^{p/2}\left((\lambda_i^p+\lambda_k^p)+\mu_{p/2}(\psi)(\lambda_i^{p/2}+\lambda_k^{p/2})\right)\right]^{1/p},
\end{equation}
where we have defined $\mu_{q}(\psi)=\Tr{(\rho^B)^{q}}=\sum_{m=1}^d\lambda_{m}^{q}$.
\end{lemma}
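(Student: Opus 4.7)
The plan is to exploit the minimization defining $D_p(\psi)$ by plugging in the Schmidt basis $\Phi^{(B)}=\{\ket{e_i^B}\}_{i=1}^d$ of party $B$; since $D_p(\psi)$ is the infimum of $D_p^{\Phi^{(B)}}(\psi)$ over all such bases, any single choice yields an upper bound, and I expect this one to in fact saturate \eqref{DpPureUB}.

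First I would specialize the pure-state formulas preceding the lemma to this basis. With $\braket{\phi_i^B}{e_m^B}=\delta_{im}$, one has $\alpha_m^{(i)}=\sqrt{\lambda_i}\delta_{mi}$, so $\ket{\xi_i^A}=\sqrt{\lambda_i}\ket{e_i^A}$ and $A_{ij}^{\Phi^{(B)}}(\psi)=\sqrt{\lambda_i\lambda_j}\ket{e_i^A}\bra{e_j^A}$. A short calculation of the commutator then produces
\begin{equation*}
M_{ij,kl}=\sqrt{\lambda_i\lambda_j\lambda_k\lambda_l}\bigl(\delta_{jk}\ket{e_i^A}\bra{e_l^A}-\delta_{il}\ket{e_k^A}\bra{e_j^A}\bigr).
\end{equation*}
Forming $M_{ij,kl}M_{ij,kl}^\dagger$, I observe that the cross terms carry the product $\delta_{jk}\delta_{il}\delta_{jl}$ and so only survive when all four indices coincide, i.e.\ exactly when $(ij)=(kl)$, which is the trivial case excluded from $\sum_I^\prime$. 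The surviving contributions split into three disjoint subcases: (i) $j=k$ and $i=l$ with $i\ne j$, where $MM^\dagger$ has two equal eigenvalues $\lambda_i^2\lambda_j^2$ and hence $\Tr(MM^\dagger)^{p/2}=2\lambda_i^p\lambda_j^p$; (ii) $j=k$ but $i\ne l$, rank one with $\lambda_j^p(\lambda_i\lambda_l)^{p/2}$; and (iii) the mirror case $i=l$ but $j\ne k$, with $\lambda_i^p(\lambda_j\lambda_k)^{p/2}$.

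Next I would assemble $D_p^{\Phi^{(B)}}(\psi)^p$ by summing these contributions over unordered pairs $I=\{ij,kl\}$. The cleanest bookkeeping is to sum over ordered quadruples and divide by two, noting that an unordered pair lying in subcase (ii) under one ordering lies in subcase (iii) under the reverse with identical Schatten contribution, so this procedure does not overcount. Using the elementary identities $\sum_{i\ne j}\lambda_i^p\lambda_j^p=\mu_p^2-\mu_{2p}$ and $\sum_{i\ne l}(\lambda_i\lambda_l)^{p/2}=\mu_{p/2}^2-\mu_p$, the three cases combine to $\mu_p\mu_{p/2}^2-\mu_{2p}$.

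The final step is an algebraic rearrangement. Using the power-sum identity $\mu_r\mu_{r'}=\sum_{i\ne k}\lambda_i^r\lambda_k^{r'}+\mu_{r+r'}$ to rewrite $\mu_p\mu_{p/2}^2-\mu_{2p}$ as $\sum_{i\ne k}\lambda_i^{3p/2}\lambda_k^{p/2}+\mu_{p/2}\sum_{i\ne k}\lambda_i^p\lambda_k^{p/2}$, and then grouping the symmetric pairs into $i<k$, reproduces
\begin{equation*}
\sum_{i<k}(\lambda_i\lambda_k)^{p/2}\bigl[(\lambda_i^p+\lambda_k^p)+\mu_{p/2}(\lambda_i^{p/2}+\lambda_k^{p/2})\bigr],
\end{equation*}
which is the $p$-th power of the stated bound. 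I anticipate that the main source of difficulty will be the combinatorial bookkeeping over unordered pairs—in particular the factor-of-two in subcase (i) arising from the rank-two structure of $MM^\dagger$—rather than any analytic inequality; indeed no nontrivial estimate is needed, since the Schmidt basis saturates the bound, which establishes the tightness claim in the lemma.
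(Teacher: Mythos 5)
Your proof is correct and follows essentially the same route as the paper: evaluate $D_p^{\Phi^{(B)}}(\psi)$ in the local Schmidt basis of $\rho^B$ and invoke the minimization in Eq.~\eqref{DpRho}, and your case analysis of $M_{ij,kl}M_{ij,kl}^{\dagger}$ (with the factor of two from the rank-two case) correctly yields $\mu_p\mu_{p/2}^2-\mu_{2p}$, which is the $p$-th power of the stated bound; the paper's proof is the same computation stated more tersely. One minor caveat: your closing remark that the Schmidt basis ``saturates the bound'' only shows that $D_p^{\textrm{LSB}}(\psi)$ equals the right-hand side, not that the minimum $D_p(\psi)$ attains it --- tightness is not part of the lemma and is established by the paper only for special cases (product states, maximally entangled states, and two-qubit states with $p=1$).
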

\begin{proof}
For an arbitrary $\ket{\psi}$ let us  choose the basis $\Phi^{(B)}$  as the local Schmidt basis (LSB) of $\rho^B$, i.e. $\Phi^{(B)}=\{\ket{e_m^B}\}_{m=1}^{d}$ so that $\braket{\phi_i^B}{e_m^B}=\delta_{im}$.
In this case we have $\ket{\xi_m^A}=\sqrt{\lambda_m}\ket{e_m^A}$ for $m=1,\cdots,d$. Using this and Eqs. \eqref{DpRhoPhiB} and \eqref{MMdXiA} we get
\begin{equation}\label{DpPureLSB}
D_{p}^{\textrm{LSB}}(\psi)=\left[\sum_{i<k}(\lambda_i\lambda_k)^{p/2}\left((\lambda_i^p+\lambda_k^p)+\mu_{p/2}(\psi)(\lambda_i^{p/2}+\lambda_k^{p/2})\right)\right]^{1/p},
\end{equation}
where by Eq. \eqref{DpRho} leads to Eq. \eqref{DpPureUB}.
\end{proof}
Note that the above upper bound is tight in the sense that there exist states for which the bound is saturated. In particular, one can easily shows that the bound is tight for the following cases.
\begin{enumerate}
\item
 For arbitrary values of $p$ and $d$,  the bound reduces to  zero for  the product state $\ket{\psi_{\textrm{pro}}}=\ket{e^A}\ket{e^B}$.
\item
For arbitrary values of  $p$ and $d$, the bound is saturated   for  the  maximally entangled state $\ket{\psi_{\max}}=\frac{1}{\sqrt{d}}\sum_{m=1}^{d}\ket{e_m^A}\ket{e_m^B}$ as
\begin{equation}\label{DpPureMax}
{D_{p}}(\psi_{\max})=\frac{1}{d^2}[d(d^2-1)]^{\frac{1}{p}}.
\end{equation}
To see this recall  that in this case we have $\rho^B_{\Phi^{(B)}_{ij}}=\frac{1}{d}\delta_{ij}=\braket{\xi_i^A}{\xi_j^A}$, irrespective of the chosen basis $\Phi^{(B)}$. This, however, can be used to define  an orthonormal basis for $\mathcal{H}^A$ as $\{\ket{\hat{\xi}_i^A}=\sqrt{d}\ket{\xi_i^A}\}_{i=1}^{d}$. Using this and Eqs.  \eqref{DpRhoPhiB}, \eqref{MMdXiA}, and after some straightforward calculations,  one can find Eq. \eqref{DpPureMax} which, clearly, coincides with the upper bound \eqref{DpPureUB}.
\item
Interestingly, when $p=1$, the bound is also tight for a general two-qubit pure state $\ket{\psi}=\sqrt{\lambda}\ket{00}+\sqrt{1-\lambda}\ket{11}$ as
\begin{equation}\label{OneNormPure}
D_{p=1}(\psi)=D_{p=1}^{\textrm{LSB}}(\psi)=2\sqrt{\lambda(1-\lambda)}\left(1+\sqrt{\lambda(1-\lambda)}\right).
\end{equation}
To see this let us choose  $\ket{\phi_1^B}=\cos \theta \ket{0}+e^{i\phi}\sin \theta \ket{1}$ and
$\ket{\phi_2^B}=\sin \theta \ket{0}-e^{i\phi}\cos \theta \ket{1}$ as a general basis for $\mathcal{H}^B$. It turns out that
\begin{eqnarray}
D_{p=1}^{\Phi^{(B)}}(\psi)=\quad\quad\quad\quad\quad\quad\quad\quad\quad\quad\quad\quad\quad\quad\quad\quad\quad\quad
\quad\quad\quad\quad\quad\quad\quad\quad\quad\quad\quad\quad\quad\\\nonumber
\sqrt{\lambda(1-\lambda)}\left\{\left[2+\vert(1-2\lambda)\sin{2\theta}\vert\right]
+\frac{1}{\sqrt{2}}\sqrt{1+4\lambda(1-\lambda)+(4\lambda(1-\lambda)-1)\cos{4\theta}}\right\},
\end{eqnarray}
depends only on the angle $\theta$. Minimum occurs for  $\theta=0$ or $\pi/2$, leads to  Eq. \eqref{OneNormPure}.
\end{enumerate}
The  following  lemma  concerns about monotinicity of these measures under LOCC operations. In particular, for two cases (i) $d=2$,  $p\le 3$ and (ii) $d=3$,  $p=1$ we have the following lemma.
\begin{lemma}\label{LemmaMonotone}
When (i) $d=2$,  $p\le 3$ and (ii) $d=3$,  $p=1$, $D_{p}^{\textrm{LSB}}(\psi)$  defines an entanglement monotone, i.e. it is a non-increasing quantity under LOCC.
\end{lemma}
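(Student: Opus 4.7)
I plan to invoke Vidal's characterization of pure-state entanglement monotones: a real-valued function of pure bipartite states is non-increasing under LOCC on average if and only if it can be written as $E(\psi) = U(\rho^A)$ for some concave, unitarily-invariant function $U$ on density matrices. By Davis's theorem, concavity of such a unitarily-invariant $U$ on Hermitian matrices is equivalent to concavity of the underlying symmetric function of eigenvalues on the probability simplex. Since Eq.~\eqref{DpPureLSB} already expresses $D_p^{\textrm{LSB}}(\psi)$ explicitly as a symmetric function of the Schmidt coefficients $\{\lambda_m\}_{m=1}^{d}$, the lemma reduces to proving concavity of that symmetric function on the simplex in the two specified regimes.

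\textbf{Case (i): $d=2$, $p\le 3$.} With Schmidt vector $(\lambda, 1-\lambda)$, formula~\eqref{DpPureLSB} collapses to the one-variable function
\begin{equation*}
g_p(\lambda) = 2^{1/p}\sqrt{\lambda(1-\lambda)}\left[\lambda^p + (1-\lambda)^p + \bigl(\lambda(1-\lambda)\bigr)^{p/2}\right]^{1/p},
\end{equation*}
so concavity in $\rho^A$ reduces to concavity of $g_p$ on $[0,1]$. I would compute $g_p''(\lambda)$ in closed form and show $g_p''(\lambda) \le 0$ for $\lambda \in (0,1)$ and all $p \in (0, 3]$. Reparametrizing by the Bloch radius $r = |1 - 2\lambda|$ exploits the symmetry $g_p(\lambda) = g_p(1-\lambda)$ and reduces the sign analysis to a single interval. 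The threshold $p = 3$ is expected to emerge naturally as the critical value at which the dominant coefficient in the numerator of $g_p''$ flips sign, so the estimate must be sharp rather than a loose bound.

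\textbf{Case (ii): $d=3$, $p=1$.} I would first obtain a compact closed form for $D_1^{\textrm{LSB}}$. Setting $x_i = \sqrt{\lambda_i}$, $X = \sum_i x_i$, and $S_m = \sum_i x_i^m$, and using $S_2 = \sum_i \lambda_i = 1$, the two nested sums in~\eqref{DpPureLSB} at $p=1$ evaluate to $\sum_{i<k} x_i x_k(x_i^2 + x_k^2) = X S_3 - S_4$ and $\sum_{i<k} x_i x_k(x_i + x_k) = X - S_3$. Adding them, the $S_3$ terms cancel and one finds
\begin{equation*}
D_1^{\textrm{LSB}}(\psi) = \Bigl(\textstyle\sum_i \sqrt{\lambda_i}\Bigr)^{\!2} - \sum_i \lambda_i^2 = \bigl(\Tr\sqrt{\rho^A}\bigr)^2 - \Tr(\rho^A)^2.
\end{equation*}
The term $-\Tr(\rho^A)^2$ is manifestly concave in $\rho^A$. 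For $(\Tr\sqrt{\rho^A})^2$, I would compute the Hessian entries
\begin{equation*}
H_{ii} = \frac{1}{2\lambda_i} - \frac{X}{2\lambda_i^{3/2}}, \qquad H_{ij} = \frac{1}{2\sqrt{\lambda_i \lambda_j}} \quad (i \ne j),
\end{equation*}
and show that the combined Hessian of $D_1^{\textrm{LSB}}$ is negative semi-definite on the tangent space $\{\vec v : v_1 + v_2 + v_3 = 0\}$ of the 2-simplex. Since the tangent space is two-dimensional, the restricted Hessian is a $2 \times 2$ symmetric matrix and negative semi-definiteness reduces to two scalar inequalities; permutation symmetry lets me assume $\lambda_1 \ge \lambda_2 \ge \lambda_3$, so the bulk of the check is parametrized by a single path through the simplex.

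\textbf{Main obstacle.} The Vidal--Davis reduction is structural; the real content lies in the concavity estimates themselves. In case (i) the threshold $p = 3$ is sharp, so the argument cannot proceed by a loose bound and must rely on a careful factorization of $g_p''$. In case (ii) the difficulty is that individual Hessian entries diverge as any $\lambda_i \to 0$, so one must show that the divergent pieces cancel on the tangent space uniformly up to the boundary of the simplex. It is precisely this boundary regime that becomes delicate for $d\ge 4$, and is presumably the reason the authors restrict the statement to $d = 3$.
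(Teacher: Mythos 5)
Your route is genuinely different from the paper's. The paper never proves concavity: it invokes the weaker Schur-concavity criterion (permutation invariance plus $(\lambda_1-\lambda_2)\bigl(\partial F/\partial\lambda_1-\partial F/\partial\lambda_2\bigr)\le 0$, i.e.\ the Nielsen majorization route), computes that expression in closed form for $d=2$, checks nonpositivity for $p\le 3$, and disposes of the $d=3$, $p=1$ case with ``a similar proof can be made.'' Your Vidal--Davis reduction commits you to full concavity of the symmetric eigenvalue function, which is strictly stronger than Schur concavity; if it goes through it buys the stronger conclusion that $D_p^{\textrm{LSB}}$ is non-increasing \emph{on average} under probabilistic LOCC, not merely under deterministic pure-state conversions. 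That is also the one real risk in case (i): the paper's own computation would not suffice for your plan, and you have not yet verified that $g_p$ is concave (rather than merely unimodal) on all of $(0,1)$ up to $p=3$ --- numerically this does appear to hold, with $g_3''(1/2)=0$ marking the sharp threshold, but the factorization of $g_p''$ must actually be carried out before the case is closed. In case (ii) your closed form $D_1^{\textrm{LSB}}=(\Tr\sqrt{\rho^A})^2-\Tr(\rho^A)^2$ is correct (it follows from \eqref{DpPureLSB} for every $d$, and agrees with the paper's two-qubit formula), and the concavity check closes cleanly with none of the boundary delicacy you anticipate: for any direction $\vec{v}$ the Hessian $H$ of $(\Tr\sqrt{\rho^A})^2$ in the eigenvalues satisfies
\begin{equation*}
\sum_{i,j}v_i H_{ij} v_j=\frac{1}{2}\Bigl(\sum_i \frac{v_i}{\sqrt{\lambda_i}}\Bigr)^{2}-\frac{1}{2}\Bigl(\sum_i\sqrt{\lambda_i}\Bigr)\sum_i\frac{v_i^2}{\lambda_i^{3/2}}\;\le\;0,
\end{equation*}
by Cauchy--Schwarz applied to $v_i\lambda_i^{-1/2}=\bigl(v_i\lambda_i^{-3/4}\bigr)\lambda_i^{1/4}$, so the divergent pieces combine with the right sign throughout the open simplex, no restriction to the tangent space is needed, and nothing special happens at $d=3$. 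Adding the manifestly concave $-\Tr(\rho^A)^2$, this actually establishes the $p=1$ statement for arbitrary $d$, which goes beyond both the lemma and the paper's proof.
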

\begin{proof} (i) For $d=2$  Eq. \eqref{DpPureUB} reduces to $D_{p}^{\textrm{LSB}}(\psi)=\left[\frac{1}{2}\left([\mu_{p/2}(\psi)]^4-[\mu_{p}(\psi)]^2\right)\right]^{1/p}$ which has a unique maximum  at $\lambda_1=\lambda_2=\frac{1}{2}$ only for $p\le 3$, i.e. $D_{p\le 3}^{\textrm{LSB}}(\psi)\le D_{p\le 3}^{\textrm{LSB}}(\psi_{\max})$. Moreover, recall that a function $F(\boldsymbol\lambda)$ is monotonously decreasing under LOCC if $F$ is invariant under any permutation of Schmidt coefficients $\lambda_i$ and if $F$ is Schur concave \cite{Zyczkowski2006,Buchleitner2009}, i.e.
$(\lambda_1-\lambda_2)\left(\frac{\partial F}{\partial\lambda_1}-\frac{\partial F}{\partial\lambda_2}\right)\le 0$ for all $\boldsymbol\lambda=\{\lambda_1,\cdots,\lambda_d\}$. It turns out that for $d=2$
\begin{eqnarray}
(\lambda_1-\lambda_2)\left(\frac{\partial D_{p}^{\textrm{LSB}}(\psi)}{\partial\lambda_1}-\frac{\partial D_{p}^{\textrm{LSB}}(\psi)}{\partial\lambda_2}\right)=\\\nonumber
 -(\lambda_1-\lambda_2)^{\frac{1}{2}}\left[2(\lambda_1^p+(\lambda_1\lambda_2)^{p/2}+\lambda_2^p)\right]^{\frac{1-p}{p}}
& &\left[(\lambda_1-\lambda_2)\left([\mu_{p/2}(\psi)]^2+2\mu_{p}(\psi)\right)-2(\lambda_1^{p+1}-\lambda_2^{p+1})\right],
\end{eqnarray}
which is nonpositive for all values of $\lambda_1,\lambda_2$ if $p\le 3$. This completes the proof. A similar proof can be made for the second case  $d=3$, $p=1$.
\end{proof}
The extension of the above lemma to arbitrary dimension $d$ requires more investigations.
As the degree of entanglement of any pure state may be characterized by  any Schur concave function of the Schmidt vector $\boldsymbol\lambda$ \cite{Zyczkowski2006},    one can use $D_p^{\textrm{LSB}}(\psi)$ to define an entanglement measure for a general mixed state $\rho$ by convex roof construction \cite{UhlmannOSID1998}.
\begin{theorem}
For the $p$ and $d$ expressed  by  lemma \ref{LemmaMonotone}, the upper bound $D_{p}^{\textrm{LSB}}(\psi)$, defined by Eq. \eqref{DpPureLSB}, is an entanglement monotone. Therefor we define entanglement of the bipartite pure state $\ket{\psi}$ as $E_p(\psi)=D_{p}^{\textrm{LSB}}(\psi)$. Furthermore, by convex roof construction \cite{UhlmannOSID1998} one can define a measure of entanglement of a general bipartite state $\rho$ as
\begin{equation}
E_p(\rho)=\inf\sum_{i}p_i E_p(\psi_i),\quad \sum_{i}p_i=1, \quad p_i\ge 0,
\end{equation}
where the infimum is taken over all pure state decomposition of $\rho$, i.e. all ensembles $\{p_i,\psi_i\}$ for which $\rho=\sum_i p_i\ket{\psi_i}\bra{\psi_i}$.
\end{theorem}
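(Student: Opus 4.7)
The plan is to split the theorem into its two natural parts: first, the claim that $D_p^{\textrm{LSB}}(\psi)$ is a pure-state entanglement monotone in the regime $(d=2,\,p\le 3)\cup(d=3,\,p=1)$; second, the claim that its convex-roof extension furnishes a bona fide entanglement measure on general mixed $\rho$. The pure-state step will follow from Lemma \ref{LemmaMonotone} together with standard facts about Schur-concave functions of the Schmidt vector, and the mixed-state step will be an application of the Uhlmann convex-roof construction already cited as \cite{UhlmannOSID1998}.

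For the pure-state part I would gather three ingredients. (i) By inspection, the explicit formula \eqref{DpPureLSB} is a symmetric function of the Schmidt coefficients $\{\lambda_m\}$, hence invariant under local unitaries. (ii) Again by inspection, $D_p^{\textrm{LSB}}(\psi)$ vanishes exactly when the Schmidt vector is $(1,0,\ldots,0)$, i.e.\ on product states, and is strictly positive otherwise. (iii) Lemma \ref{LemmaMonotone} has already established Schur concavity of $D_p^{\textrm{LSB}}$ on the Schmidt simplex for the allowed $(d,p)$. Combining these three properties with Nielsen's majorization theorem for deterministic LOCC convertibility, and with Vidal's characterization which promotes symmetry plus Schur concavity to monotonicity on average under probabilistic LOCC, one concludes that $E_p(\psi):=D_p^{\textrm{LSB}}(\psi)$ is a pure-state entanglement monotone.

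For the mixed-state part I would invoke the convex-roof extension theorem \cite{UhlmannOSID1998}: any non-negative, locally unitary invariant function $E$ on pure states that is non-increasing on average under LOCC lifts, via the convex roof, to an entanglement monotone on mixed states. The requisite properties transfer directly: non-negativity is immediate; vanishing on separable states follows because every separable $\rho$ admits a product-state decomposition on which $E_p$ vanishes term by term; local-unitary invariance holds because conjugating the decomposition leaves the infimum unchanged; and LOCC monotonicity on average is the content of the general theorem, once the pure-state version is in hand.

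The main obstacle, in my view, is essentially bookkeeping rather than genuine analysis: the substantive work is done inside Lemma \ref{LemmaMonotone}, and what remains is to marshal the general theorems and verify elementary invariance and vanishing properties. The one delicate point worth flagging is that Lemma \ref{LemmaMonotone} is naturally phrased in terms of Schur concavity (i.e.\ deterministic LOCC monotonicity via Nielsen majorization), whereas the convex-roof construction requires monotonicity on average under probabilistic LOCC. This gap is closed by Vidal's theorem, which states that for symmetric functions of the Schmidt vector, Schur concavity is equivalent to the stronger average-monotonicity property; I would cite this result explicitly rather than reproduce its proof.
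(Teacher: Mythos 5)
Your overall architecture coincides with the paper's: the theorem is stated there without an independent proof and is intended to follow from Lemma \ref{LemmaMonotone} (symmetry and Schur concavity of $D_p^{\textrm{LSB}}$ as a function of the Schmidt vector $\boldsymbol\lambda$) together with the standard convex-roof machinery of \cite{UhlmannOSID1998} --- exactly the two-step reduction you describe. Your verifications of positivity, of vanishing precisely on product states, and of local-unitary invariance are correct and are the same elementary observations the paper relies on implicitly.

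However, the one point you yourself flag as delicate is resolved incorrectly, and this is a genuine gap. Vidal's theorem does \emph{not} say that, for symmetric functions of the Schmidt vector, Schur concavity is equivalent to monotonicity on average under LOCC. It says that $E(\psi)=f(\rho_A)$ is monotone on average (and its convex roof is an entanglement monotone) provided $f$ is unitarily invariant and \emph{concave}, equivalently provided the symmetric function $g(\boldsymbol\lambda)$ is concave on the probability simplex. Concavity plus symmetry implies Schur concavity, but the converse fails: for instance $g(\lambda_1,\lambda_2)=(\lambda_1\lambda_2)^2$ is Schur concave on the $d=2$ simplex yet violates $g\bigl(\sum_i p_i\boldsymbol\mu_i\bigr)\ge\sum_i p_i\, g(\boldsymbol\mu_i)$ (take $\boldsymbol\mu_1=(0.6,0.4)$, $\boldsymbol\mu_2=(1,0)$, $p_1=p_2=1/2$), so Schur concavity alone only yields monotonicity under \emph{deterministic} LOCC via Nielsen's theorem and does not by itself license the convex-roof step. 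To close the argument you must additionally check concavity of $D_p^{\textrm{LSB}}$ in $\boldsymbol\lambda$ for the admissible $(d,p)$; for $d=2$ this can be done directly from the closed form $\bigl[\tfrac12([\mu_{p/2}(\psi)]^4-[\mu_{p}(\psi)]^2)\bigr]^{1/p}$ (e.g.\ for $p=1$ it equals $2\sqrt{\lambda(1-\lambda)}+2\lambda(1-\lambda)$, a sum of concave functions of $\lambda$). Note that the paper shares this lacuna --- it too passes from Schur concavity to the convex-roof monotone without comment --- so your proposal is faithful to the source, but as a proof it needs the concavity check rather than the (false) equivalence you invoke.
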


Let us exemplify our results by considering the two-qubit  pure state $\ket{\psi}=\sqrt{\lambda}\ket{00}+\sqrt{1-\lambda}\ket{11}$. In this case $D_{p=1}(\psi)$ is given by Eq. \eqref{OneNormPure}, and  one can find
$D_{p=2}(\psi)=\frac{1}{\sqrt{2}}\sqrt{1-[\mu(\rho^B)]^2}$.  Recall that in this case  concurrence \cite{WoottersPRL1998} is defined by $C(\psi)=2\sqrt{\lambda(1-\lambda)}=\sqrt{2(1-\mu(\rho^B))}$.
In Fig. \ref{FigTwoQubitMeasures} we have plotted $D_{p=1}(\psi)$, $D_{p=2}(\psi)$,  and  $C(\psi)$ in terms of $\sqrt{\lambda}$. Clearly, both defined measures are monotone functions of concurrence.
\begin{figure}[ht!]
\centering
\includegraphics[width=9cm]{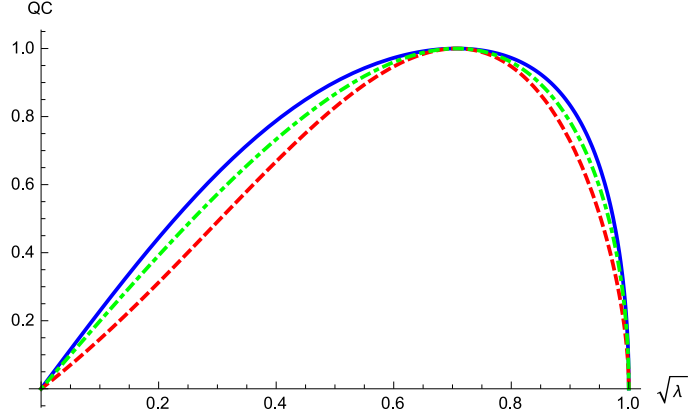}
\caption{(Color online) Quantum correlations $D_{P=2}(\psi)$ (blue, solid line), $D_{p=1}(\psi)$ (red, dashed line),  and  the concurrence $C(\psi)$ (green, doted-dashed line) of a general two-qubit pure state in terms of $\sqrt{\lambda}$. For comparison, all measures are normalized to one.}
\label{FigTwoQubitMeasures}
\end{figure}

\subsection{Properties of $D_p(\rho)$: Mixed states }
Let us now turn our attention on the  properties of the above measures for the general case of an arbitrary mixed state $\rho$. As mentioned in \cite{Brodutch2012}, a good measure of quantum correlation  should have some necessary properties. In the following we check these properties for our measure.
\begin{enumerate}
\item  Positivity, i.e. $D_{p}(\rho)\geq 0$, and the equality is satisfied if and only if the state is a classical-quantum state.

\item The measure takes its maximum value ${D_{p}}_{\max}(\rho)=\frac{1}{d^2}[d(d^2-1)]^{\frac{1}{p}}$ only for the maximally entangled
states $\ket{\psi}=\frac{1}{\sqrt{d_A}}\sum_{j=1}^{d_A}\ket{jj}$ if (i) $d=2$, $p\le 3$  or (ii) $d=3$, $p=1$. This follows easily from lemma \ref{LemmaMonotone} and the fact that for the maximally entangled states the bound is saturated.
\item
Invariance under local unitary transformations $U_{A}\otimes U_{B}$, i.e. $D_{p}(\rho)=D_{p}((U_{A}\otimes U_{B})\rho (U^{\dag}_{A}\otimes U^{\dag}_{B}))$. This follows from the fact that under such transformations $\rho\rightarrow \rho^\prime= (U_{A}\otimes V_{B})\rho(U^{\dag}_{A}\otimes V^{\dag}_{B})$, then  $A^{\Phi^{(B)}}_{ij}(\rho)\rightarrow {A^\prime}^{{\Phi^\prime}^{(B)}}_{ij}(\rho)=U_{A} A^{{\Phi^\prime}^{(B)}}_{ij}(\rho)U^{\dag}_{A}$ where ${\Phi^\prime}^{(B)}=\{U_B^\dagger\ket{\phi_{i}^{(B)}}\}$. This leads to $M^{\Phi^{(B)}}_{I}(\rho)\rightarrow {M^\prime_I}^{{\Phi^\prime}^{(B)}}(\rho)=U_A M_I^{{\Phi^\prime}^{(B)}}(\rho) U^\dagger_A$ so that
$D_{p}^{\Phi^{(B)}}(\rho)\rightarrow D_{p}^{{\Phi^\prime}^{(B)}}(\rho)$. Invoking the definition \eqref{DpRho}, we find that
$D_{p}(\rho^\prime)=\min_{\Phi^{(B)}}D_{p}^{{\Phi^\prime}^{(B)}}(\rho)=\min_{{\Phi^\prime}^{(B)}}D_{p}^{{\Phi^\prime}^{(B)}}(\rho)=D_{p}(\rho)$. This completes the assertion.

\item
No increase upon attaching a local ancillary state $\rho^C$ on the unmeasured subsystem (see \cite{Piani2012} ), i.e.  for any map
$\Gamma^C : \rho\rightarrow \rho\otimes \rho^C$ (any channel that introduces a noisy ancillary state $\rho^C$ on the unmeasured subsystem),  we have that $D_{p}(\Gamma^C(\rho))\le D_{p}(\rho)$. Moreover,  for $p=1$ the measure  is invariant under local reversible operations on the unmeasured subsystem, i.e. $D_{p=1}(\Gamma^C(\rho))=D_{p=1}(\rho)$ for any map $\Gamma^C$ that append/remove any ancillary state $\rho^C$ on/from the unmeasured subsystem $B$.
To show this let  $\Phi^{BC}=\{\ket{\phi_i^{(B)}}\ket{\phi_{i^\prime}^{(C)}}\}$ be an orthonormal basis of $\mathcal{H}^B\otimes \mathcal{H}^C$, we find
$A_{ii^\prime ,j j^\prime}^{\Phi^{(BC)}}(\Gamma^C(\rho))=\bra{\phi_{i}^{(B)}}\bra{\phi_{i^\prime}^{(C)}}\rho\otimes\rho^C\ket{\phi_{j}^{(B)}}\ket{\phi_{j^\prime}^{(C)}}$
=$A_{ij}^{\Phi^{(B)}}(\rho)\rho_{i^\prime j^\prime}^C$ where $A_{ij}^{\Phi^{(B)}}(\rho)$ is defined by Eq. \eqref{Aij} and
$\rho_{i^\prime j^\prime}^C=\bra{\phi_{i^\prime}^{(C)}}\rho^C\ket{\phi_{j^\prime}^{(C)}}$. Therefore,
$M_{II^\prime}^{\Phi^{(BC)}}(\Gamma^C(\rho))=[A_{ii^\prime ,j j^\prime}^{\Phi^{(BC)}}(\Gamma^C(\rho)),A_{kk^\prime ,l l^\prime}^{\Phi^{(BC)}}(\Gamma^C(\rho))]
=M_{I}^{\Phi^{(B)}}(\rho)\rho_{i^\prime j^\prime}^C\rho_{k^\prime l^\prime}^C$, where we have defined $I=\{ij,kl\}$, $I^\prime=\{i^\prime j^\prime,k^\prime l^\prime\}$ for the sake of simplicity. Using Eq. \eqref{DpRhoPhiBI}, this leads to
\begin{eqnarray}
D_{p}^{\Phi^{(BC)}}[M_{II^\prime}^{\Phi^{(BC)}}(\Gamma^C(\rho))]
&=&\left[\Tr\left(M_{II^\prime}^{\Phi^{(BC)}}(\Gamma^C(\rho)){M_{II^\prime}^{\Phi^{(BC)}}}^\dagger(\Gamma^C(\rho)) \right)^{\frac{p}{2}}\right]^{\frac{1}{p}} \\\nonumber
&=&\left[\Tr\left(M_{I}^{\Phi^{(B)}}(\Gamma^C(\rho)){M_{I}^{\Phi^{(B)}}}^\dagger(\Gamma^C(\rho)) \right)^{\frac{p}{2}}\left(\rho^C_{i^\prime j^\prime}\rho^C_{j^\prime i^\prime}\rho^C_{k^\prime l^\prime}\rho^C_{l^\prime k^\prime}\right)^{\frac{p}{2}}\right]^{\frac{1}{p}},
\end{eqnarray}
which can be used to write
\begin{eqnarray}\nonumber
D_{p}^{\Phi^{(BC)}}(\rho\otimes \rho^C)&=&\left[\frac{1}{2}\sum_{I}\sum_{I^\prime}\left(D_{p}^{\Phi^{(BC)}}[M_{II^\prime}^{\Phi^{(BC)}}(\Gamma^C(\rho))]\right)^p\right]^{1/p} \\ \nonumber
&=&\left[\frac{1}{2}\sum_{I}\Tr\left(M_{I}^{\Phi^{(B)}}(\rho){M_{I}^{\Phi^{(B)}}}^\dagger(\rho) \right)^{\frac{p}{2}}\sum_{I^\prime}\left(\rho^C_{i^\prime j^\prime}\rho^C_{j^\prime i^\prime}\rho^C_{k^\prime l^\prime}\rho^C_{l^\prime k^\prime}\right)^{\frac{p}{2}}\right]^{1/p}\\ \label{D-Lambda}
&=&D_{p}^{\Phi^{(B)}}(\rho)\Lambda^{\Phi^{(C)}}_p(\rho^C).
\end{eqnarray}
Here $\sum_I$ is sum over all values of the pairs $I=\{ij,kl\}$ and $\sum_{I^\prime}$ is defined similarly. Also  we have defined $\Lambda^{\Phi^{(C)}}_p(\rho^C)=\left(\sum_{i^\prime j^\prime}\left|\rho^C_{i^\prime j^\prime}\right|^{p}\right)^{2/p}$. Note that factorization in Eq. \eqref{D-Lambda} arisen because we used a product basis for $\mathcal{H}^B\otimes \mathcal{H}^C$ which, of course, is not the general one.  However, the product basis is sufficient to get the minimum, because in this case the problem of finding minima is reduced to find minimums of two independent terms.
Therefore
\begin{eqnarray}\nonumber
D_{p}(\rho\otimes \rho^C)&=&\min_{\Phi^{BC}}D_{p}^{\Phi^{(BC)}}(\rho\otimes \rho^C)=
\min_{\Phi^{(B)}}D_{p}^{\Phi^{(B)}}(\rho)\min_{\Phi^{(C)}}\Lambda^{\Phi^{(C)}}_p(\rho^C) \\
&=&D_{p}(\rho)\Lambda_p(\rho^C),
\end{eqnarray}
where $D_{p}(\rho)$ is given by Eq. \eqref{DpRho} and we have defined $\Lambda_p(\rho^C)=\min_{\Phi^{(C)}}\Lambda^{\Phi^{(C)}}_p(\rho^C)$. Now to make any progress we have to find $\Lambda_p(\rho^C)$.  Evidently, for $p=2$ we get $\Lambda^{\Phi^{(C)}}_{p=2}(\rho^C)=\mu(\rho^C)\le 1$ with $\mu(\rho^C)=\Tr(\rho^C)^2$ as the purity of $\rho^C$, which is independent of the basis $\Phi^{(C)}$. Moreover for $p\le 2$, the minimum of $\Lambda^{\Phi^{(C)}}_p(\rho^C)$ is achieved when $\Phi^{(C)}$ coincides with the eigenvectors of $\rho^C$, so that we get
\begin{eqnarray}\nonumber
\Lambda_p(\rho^C)&=&\min_{\Phi^{(C)}}\Lambda^{\Phi^{(C)}}_p(\rho^C)=\min_{\Phi^{(C)}}\left(\sum_{i^\prime j^\prime}\left|\rho^C_{i^\prime j^\prime}\right|^{p}\right)^{2/p}=\left(\sum_{i^\prime}(\lambda^C_{i^\prime})^{p}\right)^{2/p},\quad \text{for}\quad p\le 2,
\end{eqnarray}
where $\lambda^C_{i^\prime}$ denotes eigenvalues of $\rho^C$. In particular for $p=1$ we find  $\Lambda_{p=1}(\rho^C)=\Tr{\rho^C}=1$, i.e. $D_{p=1}(\rho)$ is invariant under local reversible operations on the unmeasured subsystem.

\end{enumerate}
As it is evident from the last line of the proof of the property 3, the required property of being invariant under local unitary transformations is satisfied by $D_{p}(\rho)$, but the quantity $D_{p}^{\Phi^{(B)}}(\rho)$ lacks this essential property. This means that $D_{p}^{\Phi^{(B)}}(\rho)$ depends, in general, on the orthonormal basis of the $\mathcal{H}^B$, so that $D_{p}^{\Phi^{(B)}}(\rho)$ can not be considered as a \textit{bona fide} measure of quantum correlation. Moreover, at least for $d\le 3$ and $p=1$, such defined measure of quantum correlation  leads to an entanglement monotone when we consider pure states.
 This, therefore, indicates  that the measures considered in Refs. \cite{Wu2011, Guo2012} have the drawback that  they are either   base-dependent, i.e. it is not invariant under local unitary transformations performed on the subsystems \cite{Wu2011}, or does not reduce to an entanglement monotone for pure states \cite{Guo2012}. Moreover, they may increase under reversible actions performed on the subsystem $B$ whose classicality is not tested \cite{Piani2012}.

\section{Examples}
In this section we provide two illustrative examples.

{\it  Two-qubit Werner states.---} As the first example let us consider the two-qubit  Werner state
  \begin{equation}\label{Werner state}
\rho =\frac{(2-a)}{6}I+(\frac{2a-1}{6})\textbf{F},\quad a\in [-1,1],
\end{equation}
with $ \textbf{F}=\sum_{k,l=0}^{1}(\ket{kl}\bra{lk}) $. In this case we obtain
  \begin{equation}\label{our Werner state}
D_{p=1}(\rho)=\frac{1}{6}(1-2a)^{2}=\sqrt{6}D_{p=2}(\rho) =3D_G(\rho)=3D^1_G(\rho),
\end{equation}
where $D_G(\rho)$ is the original geometric discord \cite{Dakic2010}, and $D^1_G(\rho)$ is the one-norm geometric discord \cite{PaulaPRA2013}.

{\it Two-qubit quantum-classical states.---}
Let us now consider the quantum-classical states defined by
 \begin{equation}\label{TDD density mat }
\rho=p \rho_{0}^A\otimes\ket{0}\bra{0}+(1-p) \rho_{1}^A\otimes\ket{1}\bra{1},\quad 0\le p \le 1,
\end{equation}
where $\rho_{i}^A=\frac{1}{2}(\Id+\vec{s_{i}}\cdot\vec{\sigma})$ for $i=0,1$ with $\vec{s}_{0}=(0,0,s_{0})$,  $\vec{s}_{1}=(s_{1}\sin\varphi,0,s_{1}\cos\varphi)$, and  $0\le \varphi\le \pi$ \cite{Ciccarello2014}.  The one-norm geometric discord of these states is given by \cite{Ciccarello2014}
\begin{equation}\label{TDD tc paper }
D^{1}_{G}(\rho) =\frac{\sin\varphi}{2}\min\lbrace ps_{0},(1-p)s_{1}\rbrace.
\end{equation}
For these states we have
 \begin{equation}\label{TDD D2 }
D_{p=2}(\rho) =\frac{p(1-p) s_{0}s_{1}\sin{\varphi}}{\sqrt{2}}=\frac{Q(\rho)}{4\sqrt{2}},
\end{equation}
where $Q(\rho)$ is a measure of quantum correlation of quantum-classical states defined by Abad \etal \cite{Abad2012} as
 \begin{equation}\label{TDD karimpour }
Q(\rho) =4p(1-p)\vert\vec{s_{0}}\times\vec{s_{1}}\vert.
\end{equation}
For the purpose of calculation of $D_{p=1}^{\Phi^{(B)}}(\rho)$, let us choose  $\Phi^{(B)}=\{\ket{\phi_i^B}\}_{i=1}^{2}$, where $\ket{\phi_1^B}=\cos \theta \ket{0}+e^{i\phi}\sin \theta \ket{1}$ and $\ket{\phi_2^B}=\sin \theta \ket{0}-e^{i\phi}\cos \theta \ket{1}$, as a general basis for $\mathcal{H}^B$. In this manner we obtain
\begin{equation}\label{D1QCs }
D_{p=1}^{\Phi^{(B)}}(\rho) =\left\{\vert\cos2\theta\vert+2\vert\sin2\theta\vert\right\}p(1-p) s_{0}s_{1}\sin\varphi,
\end{equation}
which, clearly, depends on the chosen basis $\Phi^{(B)}$ via $\theta$.  Using the fact that minimum of $\left\{\vert\cos2\theta\vert+2\vert\sin2\theta\vert\right\}$ occurs at $\theta=0$ or $\pi/2$, we find
\begin{equation}
D_{p=1}(\rho) =p(1-p) s_{0}s_{1}\sin\varphi=\sqrt{2} D_{p=2}(\rho) =\frac{Q(\rho)}{4}.
\end{equation}
Furthermore, geometric discord of this state can be written as
\begin{equation}
D_G(\rho)=\frac{1}{4} \left(p^2 s_0^2+(1-p)^2 s_1^2-\sqrt{p^4 s_0^4+2 p^2(1-p)^2 s_0^2 s_1^2 \cos{2 \varphi}+(1-p)^4 s_1^4}\right).
\end{equation}
Figure \ref{QCstate} clarifies the comparison between these measures.

\begin{figure}[ht!]
\centering
\includegraphics[width=18cm]{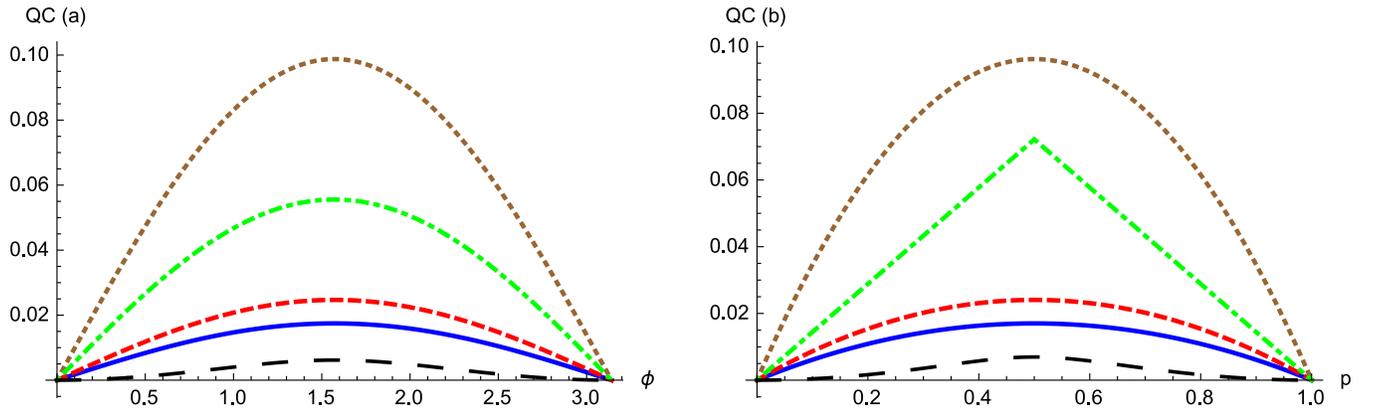}
\caption{(Color online) Quantum correlations $D_{p=2}(\rho) $ (blue, solid line), $D_{p=1}(\rho)$ (red, dashed line), $ Q(\rho) $ (brown, dotted line), $D^{1}_G(\rho)$ (green, dashed-doted line) and $D_G(\rho)$ (black, long-dashed line) of the  quantum-classical state \eqref{TDD density mat } for $s_{0}=s_{1}=\frac{1}{3}$, and (a) $p=\frac{2}{3}$,  (b)  $\varphi=\frac{\pi}{3}$.
}
\label{QCstate}
\end{figure}

\section{Conclusion}
In summary, we have defined a class of quantum correlation identifiers with the aid of  the concept of the nondisruptive local state identification. This concept provides the necessary and sufficient condition for classicality of correlation for bipartite quantum systems. Any departure from this condition could be considered as a measure for quantumness of correlation between two parts of such systems. We have employed  the general Schatten $p$-norm to evaluate how the above condition is violated. Moreover, we have looked  at the  measures from the information-theoretic point of view. For this purpose, we have checked some  properties such as: invariance under local unitary transformation,  monotonicity under local quantum operation and classical communication, and invariance under a local and reversible operations performed on the unmeasured subsystem. For the case of Hilbert-Schmidt norm,  i.e. $p=2$, we have obtained, without any optimization procedure,   an explicit closed formula for measuring the quantum correlation of an arbitrary bipartite state. For the other cases ($p\neq 2$) the optimization procedure is needed in general,  reducing therefore the computability of the measures. However, for general pure states, we have shown  that our measures are bounded form above. Furthermore, it is shown that this upper bound is an entanglement monotone so that, by using the method of convex roof construction, serve a new measure of entanglement for a  general bipartite state. In order to clarify and compare our measures with the other measures, we have provided two two-qubit examples.

\acknowledgments
The authors wish to thank The Office of Graduate
Studies of The University of Isfahan for their support.

\appendix
\section{Proof of theorem \ref{D2FATheorem}}\label{AppendixProofD=2}
In this appendix we provide a proof for Theorem \ref{D2FATheorem}.
Let $\{\hat{\lambda}_i^A\}_{i=1}^{{d_A}^2-1}$ and  $\{\hat{\lambda}_j^B\}_{j=1}^{{d_B}^2-1}$ be generators of $SU(d_A)$ and $SU(d_B)$, respectively, fulfilling the following relations
\begin{eqnarray}\label{SUmGellMann}
 \Tr{\hat{\lambda}_i^s}=0,\qquad \Tr(\hat{\lambda}_i^s\hat{\lambda}_j^s)=2\delta_{ij},\qquad [\hat{\lambda}_i^s,\hat{\lambda}_j^s]=i\sum_{k=1}^{d_s^2-1}f^s_{ijk}\hat{\lambda}_k^s, \qquad s=A,B,
\end{eqnarray}
where we have defined $f^s_{ijk}$ as the structure constant of the Lie algebra $su(d_s)$ \cite{GeorgiBook1999}.
Then a general bipartite state $\rho$ on $\mathcal{H}^{A}\otimes \mathcal{H}^{B}$
can be written in this basis as
\begin{eqnarray}\label{Rho-Bipart2}
\rho=\frac{1}{d_Ad_B}\left(\Id^A\otimes \Id^B +\vec{x}\cdot\hat{\lambda}^A\otimes \Id^B+\Id^A\otimes {\vec y}\cdot\hat{\lambda}^B
+\sum_{i=1}^{{d_A}^2-1}\sum_{j=1}^{{d_B}^2-1}t_{ij} \hat{\lambda}_{i}^A\otimes \hat{\lambda}_{j}^B\right).
\end{eqnarray}
Here $\Id^s$ stands for the unit matrix of the Hilbert space $\mathcal{H}^s$, $\vec{x}=(x_1,\cdots,x_{{d_A}^2-1})^{\T}$ and $\vec{y}=(y_1,\cdots,y_{{d_B}^2-1})^{\T}$ are local coherence vectors of the subsystems $A$ and $B$, respectively
\begin{eqnarray}\label{xy}
x_i &=&\frac{d_A}{2}\Tr{\left[(\hat{\lambda}_{i}^A\otimes \Id^B)\rho\right]},\quad
y_j =\frac{d_B}{2}\Tr{\left[(\Id^A\otimes \hat{\lambda}_{j}^B)\rho\right]},
\end{eqnarray}
and $T=(t_{ij})$ is the correlation matrix
\begin{eqnarray}\label{T}
t_{ij}=\frac{d_A d_B}{4}\Tr{\left[(\hat{\lambda}_{i}^A\otimes \hat{\lambda}_{j}^B)\rho\right]}.
\end{eqnarray}
Now, let $\{\ket{j}\}_{j=0}^{d_B-1}$ be the standard canonical basis of the subsystem $B$. Then a general basis of $\mathcal{H}^{B}$ can be written as
$\ket{\phi_i^{(B)}}=U_B\ket{i}$ where $U_B\in{SU(d_B)}$ is a unitary matrix acting on $\mathcal{H}^{B}$.
In this basis and using Eq. \eqref{Aij} we find
\begin{eqnarray} \label{comp general 2 qubit state}
A_{ij}^{\Phi{^{(B)}}}(\rho)&=&\bra{i}U_B^{\dag}\rho U_B\ket{j}\\ \nonumber&=&
\frac{1}{d_A d_B}\left\{\left(\delta_{ij}+\sum_{l=1}^{{d_B}^2-1}y_l(U_B^\dagger\hat{\lambda}_l^B U_B)_{ij}\right)\Id^A
+\sum_{k=1}^{d_A^2-1}\left(x_k\delta_{ij}+\sum_{l=1}^{d_B^2-1}t_{kl} (U_B^\dagger\hat{\lambda}_{l}^B U_B)_{ij}\right)\hat{\lambda}_k^A\right\}.
\end{eqnarray}
Using this and after some tedious but straightforward calculations  we find that
\begin{eqnarray}\label{D2FA}
[D_{p=2}^{\Phi^{(B)}}(\rho)]^2=\frac{1}{2}\sum_{I}\Tr\left(M_{I}^{\Phi^{(B)}}(\rho){M_{I}^{\Phi^{(B)}}}^{\dag}(\rho)\right)
=-\frac{4}{d_A^4 d_B^4}\Tr{\left\{\mathcal{F}^A(\rho)\left[d_B \vec{x}\vec{x}^\T+TT^\T\right]\right\}},
\end{eqnarray}
where we have defined
\begin{equation}\label{FA}
\mathcal{F}^A(\rho)=\sum_{r=1}^{d_A^2-1}\left[F^A_r(TT^\T){F^A_r}^\dagger\right].
\end{equation}
Here $\{F^A_r\}_{r=1}^{d_A^2-1}$ is the adjoint representation of the $su(d_A)$ Lie algebra  defined as $(F^A_r)_{pq}=-if^A_{pqr}$ with $f^A_{pqr}$ as the structure constants of the algebra given by Eq. \eqref{SUmGellMann}.
It may be useful to obtain the matrix $\mathcal{F}^A(\rho)$ for $d_A=2$. In this case we find
\begin{equation}
\mathcal{F}^A(\rho)=4\left(\begin{array}{ccc}-(TT^\T)_{22}-(TT^\T)_{33} & (TT^\T)_{12} & (TT^\T)_{13} \\
(TT^\T)_{12} & -(TT^\T)_{11}-(TT^\T)_{33} & (TT^\T)_{23} \\
(TT^\T)_{13} & (TT^\T)_{23} & -(TT^\T)_{11}-(TT^\T)_{22}
\end{array}\right),
\end{equation}
where $(TT^\T)_{ij}$ denotes the matrix elements of $TT^\T$. It turns out from Eqs. \eqref{D2FA} and \eqref{FA} that $D_{p=2}^{\Phi^{(B)}}(\rho)$ does  not depend on the chosen basis $\Phi^{(B)}$, so that we arrive at the theorem \ref{D2FATheorem}.

\end{document}